\newtheorem{proposition}{Proposition}
\newtheorem{theorem}{Theorem}
\newtheorem{conjecture}{Conjecture}
\newtheorem{corollary}{Corollary}
\newcommand{\commentout}[1]{}
\newcommand{\citeyear}[1]{\cite{#1}}
\title{
When Competition Helps: Achieving Optimal Traffic Flow \\ with Multiple Autonomous Planners
}
\author[1]{Ivan Geffner}
\author[2]{Erez Karpas}
\author[2]{Moshe Tennenholtz}
\affil[1]{Utrecht University}
\affil[2]{Technion - Israel Institute of Technology}
\date{}
\begin{document}

\renewcommand\thefootnote{}
\footnotetext{%
  Supported by a grant from the Israeli Ministry of Science and Technology.
}
\footnotetext{%
  We thank Vincent Conitzer and Emanuel Tewolde (CMU/FOCAL) for insightful discussions throughout the project.
}
\renewcommand\thefootnote{\arabic{footnote}}
\maketitle

\begin{abstract}
The inefficiency of selfish routing in congested networks is a classical problem in algorithmic game theory, often captured by the Price of Anarchy (i.e., the ratio between the social cost of decentralized decisions and that of a centrally optimized solution.) With the advent of autonomous vehicles, capable of receiving and executing centrally assigned routes, it is natural to ask whether their deployment can eliminate this inefficiency. At first glance, a central authority could simply compute an optimal traffic assignment and instruct each vehicle to follow its assigned path. However, this vision overlooks critical challenges: routes must be individually rational (no vehicle has an incentive to deviate), and in practice, multiple planning agents (e.g., different companies) may coexist and compete. Surprisingly, we show that such competition is not merely an obstacle but a necessary ingredient for achieving optimal outcomes. In this work, we design a routing mechanism that embraces competition and converges to an optimal assignment, starting from the classical Pigou network as a foundational case.
\end{abstract}

\section{Introduction}

The rapid growth of autonomous transportation technologies is reshaping the way we think about traffic management and network optimization. Traditional traffic flow models assume that drivers act selfishly, selecting routes that minimize their own travel time without regard for the impact on others. This decentralized behavior can lead to highly inefficient outcomes, particularly in congested networks. By contrast, autonomous vehicles present a unique opportunity: since they can be centrally coordinated, it becomes possible, at least in principle, to compute and enforce traffic assignments that optimize overall network performance. This raises a fundamental question at the intersection of computer science, game theory, and transportation systems: Can autonomous vehicles be leveraged to eliminate the inefficiencies of selfish routing? Addressing this question requires rethinking both the limitations of classical routing models and the design of mechanisms that can harness centralized control in a realistic and robust way.

At first glance, the presence of a central planner with access to global traffic data and control over autonomous vehicles appears to offer a straightforward solution: compute the socially optimal routing that minimizes total travel time and assign each vehicle accordingly. However, this simplistic view overlooks two key challenges inherent in real-world settings. First, drivers (or their autonomous systems) must have an incentive to comply with the assigned routes, especially in scenarios where individual costs may diverge from the system-optimal outcome. Second, traffic may be managed not by a single centralized authority, but by multiple competing planners (e.g., Waymo, Zoox, and Tesla), each responsible for routing a subset of vehicles and optimizing its own objectives, potentially at the expense of overall network efficiency.
For instance, consider a Pigou network~\cite{pigou} as follows.

\begin{center}
    \includegraphics[scale=1]{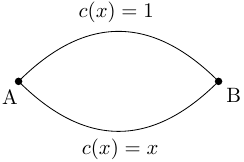}
\end{center}

Here, traffic flows from point $A$ to point $B$. The top path has a constant travel cost of $1$, while the bottom path’s cost is the fraction of total traffic using it. It is straightforward to verify that the socially optimal flow splits traffic evenly between the two paths. However, this allocation is not stable under individual incentives: drivers assigned to the top path would prefer to (manually) switch to the bottom path, where their individual travel time would be lower.

Now suppose there are four independent planners, each responsible for routing an equal portion of the total traffic across the same network. If each planner adheres to the socially optimal strategy (i.e., splitting traffic equally) the resulting flow remains globally optimal. Nevertheless, when each planner aims to minimize the total travel time for their own share of traffic, their best response becomes routing all of it through the bottom path, assuming the others maintain a balanced split. These examples highlight the limitations of simply enforcing optimal flow patterns without accounting for the underlying incentives of the agents involved. In particular, they show that the flow provided by the planners must satisfy the following two conditions:

\begin{itemize}
    \item [(1)] \textbf{Individual rationality:} Following the proposed flow must be a best-response for the routed agents at all stages of the game.
    \item [(2)] \textbf{Resilience to competition:} The flow proposed by the planners must be optimal for their fraction of the traffic, given the flow proposed by the remaining planners (this can also be viewed as \emph{planner individual rationality}.)
\end{itemize}

Unfortunately, these conditions cannot be satisfied in isolation. In the Pigou network described above, individual agents consistently have an incentive to deviate and choose the bottom path. Likewise, each planner is incentivized to route more than half of their traffic through the bottom path, even though this behavior leads to a globally inefficient outcome. However, when the interaction is modeled as a repeated game (where cars must travel from $A$ to $B$ each day), it can be shown that there exist strategy profiles that satisfy (1) and (2) simultaneously. In fact, there exist solutions that additionally satisfy the following property.

\begin{itemize}
    \item [(3)] \textbf{Optimality:} The strategy profile must always converge to an optimum flow.
\end{itemize}

While modeling the interaction between planners as a repeated game opens the door to equilibria that satisfy properties (1)–(3), many such equilibria rely on unrealistic forms of coordination and punishment. For example, consider a strategy profile in which each planner initially splits their assigned traffic evenly between the two paths. If any planner deviates, the others respond by routing all of their traffic through the bottom path for five consecutive days, stacking with each additional defector. This strategy is subgame perfect: no planner has an incentive to deviate if all others follow the prescribed behavior, since doing so would trigger a costly response. However, the equilibrium relies on a fragile and counterintuitive mechanism: planners are forced to carry out inefficient punishments, not because it benefits them directly, but to avoid being punished themselves for failing to punish others. As a result, the system sustains itself through a kind of threat cascade, in which deviations lead to cycles of punishment that no individual planner finds rational in isolation. These punishment chains are especially awkward when considering settings with two planners. In these cases, planner $A$ should supposedly punish planner $B$ when $B$ defects since, otherwise, $B$ would punish $A$ for not punishing her. To avoid such implausible dynamics, we sometimes require an additional property:

\begin{itemize}
    \item [(4)] \textbf{No collective punishments:} In every stage of the game, the flow proposed by each planner cannot hurt all planners simultaneously.
\end{itemize}

To satisfy all four desiderata simultaneously, we design a strategy in which each planner begins by proposing a socially optimal flow (e.g., a 50/50 split in Pigou). If a planner deviates—by assigning too much traffic to the faster path—other planners temporarily switch to more selfish routing for several rounds. This reaction raises congestion and serves as a deterrent. The punishment duration is calibrated to make deviation unprofitable, even under discounting.

By design, this strategy requires the presence of several distinct planners to achieve the desired properties. Somewhat counterintuitively, when all autonomous vehicles are centrally controlled by a single planner, no strategy profile can simultaneously satisfy both individual rationality and resilience to competition. In fact, (1)-(4) cannot be simultaneously satisfied if a planner controls a sufficiently large portion of the traffic or if there are not enough distinct planners. This highlights an important insight: a certain degree of decentralized competition among planners is not just compatible with, but actually essential for, realizing maximal social welfare.

\subsection{Related Work}

A central theme in mechanism design is the role of mediators in improving strategic outcomes. In fact, the inefficiency of selfish behavior in routing games is classically captured by the concept of the \emph{Price of Anarchy}, introduced by Koutsoupias and Papadimitriou~\citeyear{koutsoupias1999worst}, and further explored by Roughgarden and Tardos~\citeyear{roughgarden2002bad} in the context of selfish vs.\ centrally optimized routing. Aside from selfish routing, mediators can be beneficial in various ways: through signaling~\cite{arieli2023mediated, arieli2023resilient, corrao2023mediated}, messaging~\cite{ben2003cheap,gerardi2004unmediated,abraham2019implementing,geffner2024communication} promising payments~\cite{monderer2003k}, expanding the available strategy sets~\cite{geffner2024making}, or enabling commitment to strategies~\cite{tennenholtz2004program, ijcai2023p312}. However, in our setting, introducing a single central mediator is not sufficient to achieve optimal outcomes; rather, competition between multiple mediators (i.e., planners) is essential. The idea of competition between mediators has been studied in other domains, such as auctions~\cite{comp-auctions1, comp-auctions2, ashlagi2011simultaneous, monderer2004k} and facility location problems~\cite{comp-facilities1}.

One might expect that repeated interactions would trivialize coordination via the folk theorem, making efficient equilibria easy to sustain. However, this intuition can sometimes be misleading~\cite{eq-restarts, borgs2008myth}. Repeated games are also closely tied to learning dynamics. Notions such as learning equilibria have yielded positive results in certain congestion games~\cite{tennenholtz2009learning}, though these do not capture the planner-mediated structure of our model.

Our model involves planners acting on behalf of a subset of users, and their strategic deviations can be interpreted as group deviations. The literature on strong equilibria in congestion games~\cite{holzman2003network, holzman2015strong} investigates conditions under which no coalition of agents can jointly deviate profitably. While this notion is related, it does not capture the planner-level dynamics we study. In fact, our results imply robustness to group deviations both from the routed agents and among competing planners.

Finally, our work connects to congestion games with partitioned agents, where deviation capabilities are structured around fixed groups~\cite{feldman2009partition, coalition-congestion}. While these models address collective behavior within groups, they do not capture the planner-level strategic dynamics central to our framework.

\section{Model}

Although one might expect that modeling car-level deviations calls for a discrete formulation (in fact, all the results in this paper are also valid in a discrete setting), our analysis is significantly simplified when traffic is treated as a continuous flow. In particular, the strategies we propose for the planners (and the arguments establishing their effectiveness) are easier to describe and verify in a continuous setting since we can always choose exact flow partitions. Accordingly, we model traffic as a unit flow that can be divided continuously among different paths and we consider strategic decisions made by planners and by a non-zero portion of the traffic. The precise definitions of these concepts will be introduced next.

\subsection{Single-Commodity Flow Networks}\label{sec:flows}

A single-commodity flow network $G$ is a tuple $(V,E, s,t, c)$ where
\begin{itemize}
    \item $V$ is the set of nodes and $E$ is the set of edges.
    \item $s$ is the source node and $t$ is the sink node.
    \item $c: E \times [0,1] \to \mathbb{R}^+$ is a cost function such that $c(e,x)$ describes the cost of going through edge $e$ given the amount of flow that traverses it. For any edge $e$, the restriction $c_e(\cdot)$ of $c$ to $e$ must be non-decreasing.
\end{itemize}

Given a (single-commodity) flow network $G = (V,E, s,t, c)$, let $\mathcal{P}$ denote the set of all paths from source $s$ to sink $t$. A flow is a function $f : \mathcal{P} \to [0,1]$ such that $\sum_{P \in P} f(P) = 1$. If we view the network as a system of one-way roads, a flow describes how traffic is split among the several paths from the source to the sink. Denote by $\mathcal{P}_e$ the set of all paths containing edge $e$. Given a flow $f$, we also denote $f_e := \sum_{P \in \mathcal{P}_e} f(P)$. Intuitively, $f_e$ denotes the amount of traffic that goes through some edge $e \in E$. With these definitions, we can define the cost of traversing a path $P \in \mathcal{P}$ as $c_P(f) := \sum_{e \in P} c_e(f_e)$, and the total flow cost as $c(f) := \sum_{P \in \mathcal{P}} f(P) \cdot c_P(f).$

\subsection{Routing Games with Multiple Planners}\label{sec:routing-game}

A routing game (with multiple planners) is a tuple $(G, C, \bm{\alpha})$ where $G = (V,E, s,t, c)$ is a discrete flow network, $C = \{1, \ldots, N\}$ is the set of planners, and $\bm{\alpha} = (\alpha_1, \ldots, \alpha_N)$ is a unity partition where $\alpha_i$ represents the fraction of traffic assigned to planner $i$.
%, and $\lambda$ is the discount factor (i.e., the parameter that determines how much players value future payoffs relative to current ones).

A strategy profile $\vec{\sigma}$ specifies how each planner behaves throughout the game. We denote by $\sigma_i$ the strategy of planner $i$, which determines the route recommendations made by planner $i$ to its assigned agents. The game proceeds in discrete stages: in stage $k$, all planners simultaneously propose routes to their respective agents, after which all cars take the recommended routes. The strategy of each planner $i$ depends on its local history, which is the sequence of all the decisions made by $i$, and the flow distribution in all previous stages of the game. We can represent the decision (i.e., the route recommendations) made by planner $i$ at stage $k$ as a measurable function\footnote{Measurability is only required as a mathematical formalism. All such functions that we will consider will be piecewise constant.} $g_{i,k} : [0,1] \to \mathcal{P}$. Intuitively, $[0,1]$ represents $i$'s share of the total traffic, and $g_{i,k}$ represents the path that $i$ assigns to each of the cars. We can represent car defections as tuples $(i, S, D)$, where $i$ is a given planner, $S$ is a subset of $[0,1]$ and $D$ is a function that takes as input their local history (i.e., the sequence of all flows and recommendations from planner $i$ in earlier stages) and outputs a function $d: S \to \mathcal{P}$. Intuitively, we can view $(i,S,D)$ as a deviation performed by the subset $S$ of cars assigned to planner $i$ in which, given all the information available, they decide to take the paths according to function $d$ instead of the paths recommended by the planner. 

It is important to note that we are assuming that all cars in the system are autonomous. This is because we can view the system as a Stackelberg game in which non-autonomous vehicles move first, routing through the fastest routes, and then the planners suggest routes to the autonomous vehicles in response to the routes that non-autonomous cars take. Since we can assume that non-autonomous cars are always routing through the same paths, we can simply account for them by incorporating their induced cost into the network $G$.

Given a deterministic strategy profile $\vec{\sigma} = (\sigma_1, \ldots, \sigma_N)$, denote by $f_e^k$ the total flow that traverses edge $e$ at stage $k$. If there are no defections, its actual value can be computed by aggregating the amount of flow that each planner $i$ sends through $e$, scaling it with the portion of the traffic managed by planner $i$, which can be expressed as $$f_e^k = \sum_{i = 1}^N \alpha_i \cdot \left |g_{i,k}^{-1}(\mathcal{P}_e)\right|.$$ 

If there is a defection $\mathcal{D} = (i,S,D)$, the expression above looks the same, except that the contribution of planner $i$'s traffic is $\alpha_i \cdot \left(\left| g_{i,k}^{-1}(\mathcal{P}_e) \cap \overline{S} \right| + \left|d_k^{-1}(\mathcal{P}_e)\right| \right)$ instead, where $d_k : S \to [0,1]$ encodes the defection of agents in $S$ at stage $k$. The total cost of the flow at stage $k$ is given by $c^k(\vec{\sigma}) := \sum_{e \in E(G)} f_e^k \cdot c_e(f_e^k)$, and planner $i$'s cost at stage $k$ is given by $c_i^k(\vec{\sigma}) = \sum_{e \in E(G)} \left| g_{i,k}^{-1}(\mathcal{P}_e)\right| \cdot c_e(f_e^k)$ whenever there are no defections. Otherwise, if there is a defection $\mathcal{D} = (i, S, D)$, we have that $c_i^k(\vec{\sigma}, \mathcal{D}) = \sum_{e \in E(G)} \left( \left| g_{i,k}^{-1}(\mathcal{P}_e) \cap \overline{S} \right| + \left|d_k^{-1}(\mathcal{P}_e) \right|\right) \cdot c_e(f_e^k).$ The (average) cost $c_{(i,T)}^k(\vec{\sigma})$ of a subset $T$ of cars assigned to planner $i$ can be computed in a similar fashion. When there are no defections, we have that $$c_{(i,T)}^k = \frac{1}{|T|} \sum_{e \in E(G)} \left| g_{i,k}^{-1}(\mathcal{P}_e) \cap T \right| \cdot c_e(f_e^k)$$
and, when there is a defection $\mathcal{D} = (i, S, D)$, we have that the expression for $c_{(i,T)}^k(\vec{\sigma}, \mathcal{D})$ is given by

$$\frac{1}{|T|} \sum_{e \in E(G)} \left( \left| g_{i,k}^{-1}(\mathcal{P}_e) \cap \overline{S} \cap T \right| + \left|d_k^{-1}(\mathcal{P}_e) \cap T \right|\right) \cdot c_e(f_e^k).$$

Note that, if $T = S$, the expression above is equal to $\frac{1}{|S|} \sum_{e \in E(G)} |d_k^{-1}(\mathcal{P}_e)| \cdot c_e(f_e^k)$. Given a discount factor $\lambda \in (0,1)$, the total cost of the flow is given by $$c^\lambda(\vec{\sigma}) := \sum_{k \ge 1} \lambda^k c^k(\vec{\sigma}).$$ The total cost $c_i^\lambda(\vec{\sigma})$ of planner $i$ and $c_{(i,T)}^\lambda(\vec{\sigma})$ of a subset $T$ of $i$'s agents are defined analogously. Even though all these values have been defined for deterministic strategies, we simply take their expectation when considering randomized strategies\footnote{Note that only planners can randomize. By construction, we can assume that the traffic defections are deterministic without loss of generality.}. In some scenarios we also want to analyze the expected costs after a concrete sequence of decisions made by planners and cars. If $h$ is the global history at that point (i.e., the sequence of flows and recommendations by planners), we denote by $c^\lambda(\vec{\sigma} \mid h)$ the total cost conditioned on reaching history $h$. All other values $c_i^k(\vec{\sigma} \mid h), c_i^\lambda(\vec{\sigma} \mid h), c_{(i,T)}^k(\vec{\sigma} \mid h), c_{(i,T)}(\vec{\sigma} \mid h)$ are defined analogously.

\subsection{Desiderata}

The definitions presented in Section~\ref{sec:flows} allows us to rigorously define the properties (1)-(4) shown in the introduction. Given a routing game $(G, C, \bm{\alpha})$ and a strategy profile $\vec{\sigma}$ for the planners, these are the desired properties that we want $\vec{\sigma}$ to satisfy.

\begin{itemize}
    \item [(1)] \textbf{Individual rationality:} Regardless of the state (history) of the game, for all planners $i$ and all subsets $S$ of $i$'s assigned agents, there is no defection $\mathcal{D} = (i,S,D)$ that strictly decreases the total cost of $i$'s agents in $S$. In other words, for all global histories $h$ and all defections $\mathcal{D} = (i,S,D)$, there exists $\lambda_0 \in (0,1)$ such that, for all $\lambda > \lambda_0$, $$c_{(i,S)}^\lambda(\vec{\sigma} \mid h) \ge c_{(i,S)}^\lambda(\vec{\sigma}, \mathcal{D} \mid h).$$
    \item [(2)] \textbf{Resilience to competition:} For all global histories $h$, all planners $i$, and all strategies $\sigma'_i$ for $i$, there exists $\lambda_0 \in (0,1)$ such that, for all $\lambda > \lambda_0$, $$c_i^\lambda(\vec{\sigma} \mid h) \ge c_i^\lambda((\sigma'_i, \vec{\sigma}_{-i}) \mid h).$$
    \item [(3)] \textbf{Optimality:} Let $c_{OPT}$ be the minimum total cost of a unit flow in $G$. Then, for all global  histories $h$, $$\lim_{k \to \infty} c^k(\vec{\sigma} \mid h) = c_{OPT}.$$
    \item [(4)] \textbf{No collective punishments:} For all histories $h$, all stages $k$, all planners $i$, and all strategies $\sigma'_i$ for $i$, there exists a planner $j \in C$ such that $$c^k_j(\vec{\sigma} \mid h) \ge c^k_j((\vec{\sigma}_{-i}, \sigma'_i) \mid h).$$
\end{itemize}

Note that we only require that, for each node, there is a minimum discount factor after which defections from planners or cars are not beneficial anymore. The reason for not fixing the discount factor beforehand (i.e., a discount factor for which no defection works, regardless of the node) is that there must be some sort of punishment to prevent planners and cars from defecting, and this punishment must stack with successive defections. However, if the accrued punishment is too large, for each $\lambda$ there exists a point in which it is actually beneficial to route cars through a faster path in the present, even though defecting means getting endlessly punished in the far future. The fact that we require a possibly different $\lambda$ on each node is meant to represent that, regardless of what has happened so far, if agents care enough about the future, defections become irrational.

\subsection{One-Shot Routing Games and Planner Equilibrium}

For future reference, we will consider also a one-shot version of the routing game presented in Section~\ref{sec:routing-game}. This one-shot version is played exactly the same way except for the fact that there is only a single stage. We also say that a planner strategy profile $\vec{\sigma}$ is a planner equilibrium if it is resilient to competition. In this setting this just means that, for all planners $i$ and all strategies $\sigma'_i$ for $i$, $$c(\vec{\sigma}) \ge c(\vec{\sigma}_{-i}, \sigma'_i).$$

At a high level, a planner equilibrium is simply a Nash equilibrium between planners in the one-shot version of the game in which cars are forced to obey their planner's recommendation. This notion will be useful to characterize the settings in which it is possible to achieve optimal outcomes in (repeated) routing games. Since these routing games are played only during one stage, for our purposes we consider that two planner equilibria are equal if the flows proposed by each planner are identical in both equilibria, except possibly for car permutations. We do this consideration since it does not really matter which cars go through each path, as long as the total amount of cars in each path routed by each planner remains the same.

\section{Main Results}

Our main contribution is showing that competition between several planners is actually necessary to achieve optimal outcomes in selfish routing games. Even though both the construction of the equilibrium strategy and the proof of the impossibility result when there is lack of competition are general and, in principle, could work for any network and any number of sources and sinks, we give concrete thresholds for the case of Pigou's network. These thresholds are closely related to the planner equilibrium in the one-shot version of the game, which are characterized by the following result.

\begin{theorem}\label{thm:characterization2}
Let $G$ be Pigou's network and let $\Gamma = (G,C,\bm{\alpha})$ be a one-shot routing game with multiple planners. Then, there exists a  unique deterministic planner equilibrium $\vec{\sigma}$ in which, if $\lambda_i$ is the portion of planner $i$'s share of the traffic that is routed through the bottom path, then \begin{equation}\label{eq:characterization}
\lambda_i = \left \{
\begin{array}{lll}
1 & \quad & \displaystyle \mbox{if } \frac{1 - \sum_{j \ge 1} \alpha_j \lambda_j}{\alpha_i} > 1\\
\displaystyle \frac{1 - \sum_{j \ge 1} \alpha_j \lambda_j}{\alpha_i} & \quad &  \mbox{otherwise.}
\end{array}\right. 
\end{equation}
The solution of this system can be found in $O(n \log n)$ time, where $n$ is the number of planners. If the partition is sorted by size beforehand (e.g., $\alpha_1 \le \alpha_2 \le \ldots \le \alpha_n$), the solution can be found in $O(n)$ time.
\end{theorem}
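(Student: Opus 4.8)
The plan is to first reduce the simultaneous equilibrium condition to a single scalar fixed-point equation and then exploit its monotone structure for both uniqueness and the running time.

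First I would write down each planner's cost explicitly. On Pigou's network the top edge has constant cost $1$ and the bottom edge has cost equal to its own flow $x := \sum_j \alpha_j \lambda_j$, so if planner $i$ routes a fraction $\lambda_i$ of its share through the bottom path, the definitions of Section~\ref{sec:routing-game} give $c_i = (1-\lambda_i) + \lambda_i x$. Writing $x = x_{-i} + \alpha_i \lambda_i$ with $x_{-i} := \sum_{j \ne i}\alpha_j\lambda_j$ held fixed, this is $c_i(\lambda_i) = 1 - \lambda_i + \lambda_i x_{-i} + \alpha_i \lambda_i^2$, strictly convex in $\lambda_i$ since the second derivative is $2\alpha_i > 0$. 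Hence planner $i$ has a unique best response, given by the first-order condition $\lambda_i = (1 - x_{-i})/(2\alpha_i)$ capped at $1$. Substituting $x_{-i} = x - \alpha_i \lambda_i$ shows this is equivalent to: $\lambda_i = 1$ when $(1-x)/\alpha_i > 1$, and $\lambda_i = (1-x)/\alpha_i$ otherwise. A profile is a planner equilibrium exactly when every planner best-responds, so equilibria are precisely the solutions of system~(\ref{eq:characterization}); strict convexity additionally rules out mixing, so every equilibrium is deterministic.

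Next I would prove existence and uniqueness by collapsing the system to the scalar $x$. Any equilibrium with total bottom flow $x$ must have $\lambda_i = \min(1, (1-x)/\alpha_i)$, hence $\alpha_i\lambda_i = \min(\alpha_i, 1-x)$, and the consistency requirement $x = \sum_i \alpha_i\lambda_i$ becomes $x = \Phi(x)$ where $\Phi(x) := \sum_i \min(\alpha_i, 1-x)$. The map $\Phi$ is continuous and non-increasing on $[0,1]$, so $g(x) := x - \Phi(x)$ is continuous and strictly increasing. Since $\Phi(0) = \sum_i \alpha_i = 1$ and $\Phi(1) = 0$, we have $g(0) = -1 < 0 < 1 = g(1)$, so $g$ has a unique root $x^\ast \in (0,1)$. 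This $x^\ast$ pins down every $\lambda_i$ uniquely (the boundary case $\alpha_i = 1 - x^\ast$ yields $\lambda_i = 1$ under both branches), establishing the unique deterministic planner equilibrium.

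Finally I would turn this into the stated running time. Setting $\tau := 1 - x^\ast$, a planner is \emph{saturated} ($\lambda_i = 1$) iff $\alpha_i \le \tau$ and \emph{interior} ($\lambda_i = \tau/\alpha_i$) otherwise, so once the $\alpha_i$ are sorted increasingly the saturated planners form a prefix. Positing that the first $k$ planners are saturated forces, via $x^\ast = \sum_{j \le k}\alpha_j + (n-k)\tau$ and $\tau = 1 - x^\ast$, the closed form $\tau = (1 - A_k)/(1 + n - k)$ with $A_k = \sum_{j \le k}\alpha_j$; the correct $k$ is the unique one obeying the consistency inequalities $\alpha_k \le \tau < \alpha_{k+1}$ (with $\alpha_0 = 0$, $\alpha_{n+1} = +\infty$), whose existence and uniqueness follow from the previous paragraph. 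Precomputing the prefix sums $A_k$ lets each candidate $k$ be tested in $O(1)$, so after an $O(n\log n)$ sort the correct prefix is found by an $O(n)$ scan; if the partition is already sorted, the whole procedure is $O(n)$. The main obstacle I anticipate is the uniqueness claim, since system~(\ref{eq:characterization}) is a simultaneous fixed point and not obviously single-valued; the key move that unlocks everything is the aggregation into $x = \Phi(x)$, after which monotonicity of $\Phi$ makes uniqueness immediate and simultaneously exposes the sorted-prefix structure driving the algorithm. The one place requiring care is verifying that this scalar reduction is faithful — in particular that boundary ties $\alpha_i = \tau$ leave the induced flow unchanged and so create no spurious multiplicity.
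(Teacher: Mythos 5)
Your proof is correct and follows essentially the same route as the paper's: the same quadratic best-response computation reducing planner equilibria to system~(\ref{eq:characterization}), the same aggregation into a monotone scalar fixed-point equation ($F' = \sum_i \min\{\alpha_i, 1-F\}$) for uniqueness, and the same sorted-prefix structure with prefix sums for the $O(n \log n)$ / $O(n)$ running time. Your write-up is in fact slightly more careful than the paper's in making existence explicit (via the intermediate value theorem) and in handling boundary ties $\alpha_i = 1 - F$, but these are refinements of the same argument rather than a different approach.
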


With this, we can give necessary and sufficient conditions for satisfying all the desired properties.

\begin{theorem}\label{thm:characterization1}
Let $G$ be Pigou's network, let $\Gamma = (G,C,\bm{\alpha})$ be a routing game with multiple planners, and let $F$ be the total flow through the bottom path in the planner equilibrium of the one-shot version of $\Gamma$. If $F > \frac{3}{4}$, there exists a strategy profile $\vec{\sigma}$ that satisfies individual rationality, resilience to competition, optimality and no collective punishments. If $F < \frac{3}{4}$, there exists no strategy profile for $\Gamma$ that satisfies individually rationality and no collective punishments.
\end{theorem}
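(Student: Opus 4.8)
The plan is to treat the two directions separately, with the one-shot planner equilibrium of Theorem~\ref{thm:characterization2} serving as the pivotal object in both. The first observation is that for Pigou's network the socially optimal flow puts exactly $1/2$ on the bottom path, so $c_{OPT} = 3/4$; the threshold in the statement is therefore a comparison between the equilibrium bottom flow $F$ and the optimal cost. I would establish a single structural lemma linking property (4) to $F$: at any history and stage, if the planners' proposed bottom flow $b$ exceeds $F$, then some planner is routing strictly past its best response (the first-order condition behind Equation~\eqref{eq:characterization} shows that a planner's bottom mass at an interior best response equals $1-b$), and cutting that planner's bottom mass strictly lowers its own cost while weakly lowering every other planner's cost. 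Such a cut is a Pareto improvement for the planners, so the proposed flow is a collective punishment and (4) fails. Hence (4) forces the prescribed bottom flow to satisfy $b \le F$ at every reachable history.

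For the positive direction ($F > 3/4$), I would use a two-mode strategy. On the cooperative path all planners propose the optimal flow (bottom $=1/2$), rotating which of their cars take each path so that every car's per-round cost averages to $c_{OPT} = 3/4$ and every planner's per-round cost is $3\alpha_i/4$; this makes (3) immediate on path. Any deviation (by a planner or by a positive-measure set of cars), detected as a discrepancy between the realized flow and the prescribed one, triggers a punishment phase in which the planners revert to the one-shot equilibrium flow $F$ for a calibrated finite number of rounds before resuming cooperation, so that $\lim_k c^k = c_{OPT}$ continues to hold at every history. The crucial arithmetic is that $F > 3/4$ is exactly equivalent to the one-shot equilibrium being strictly worse than the optimum for every agent: I would check, using $x_i^* = 1-F$ for interior planners (whence $\alpha_i \ge 1-F$) and $x_i^* = \alpha_i$ for saturated ones, that each planner's equilibrium cost strictly exceeds $3\alpha_i/4$, and that each car's equilibrium cost $F$ strictly exceeds $3/4$. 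Thus reverting to equilibrium is a genuine deterrent; since it is a Nash equilibrium among planners, it is not Pareto-dominated by any unilateral planner deviation, so the punishment phase itself respects (4) and no planner must be coerced into punishing. Discounting then handles (1) and (2): each deviation yields a bounded one-shot gain, wiped out by the per-round loss $F-3/4 > 0$ (or its planner analogue) over a sufficiently long punishment, for all $\lambda$ above a history-dependent threshold.

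For the impossibility ($F < 3/4$), I would combine the structural lemma with the optimality bound. By the lemma, (4) caps the prescribed bottom flow at $b \le F < 3/4$, so the per-round total cost $1-b+b^2 \ge c_{OPT} = 3/4 > F$; since total flow is $1$, the average car cost over the discounted horizon is at least $\tfrac34\cdot\tfrac{\lambda}{1-\lambda}$. I would then select a small, positive-measure set $S$ of highest-cost cars (whose cost is at least this average) and let them deviate by always taking the bottom path. Because (4) keeps the planners' bottom flow at most $F$ at every history---including those following the deviation---the cars in $S$ pay at most $F$ plus their own vanishing contribution per round, which is strictly below their prescribed cost. This is a profitable car deviation, contradicting (1). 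Hence no profile can satisfy (1) and (4) simultaneously when $F < 3/4$.

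I expect the main obstacle to be the structural lemma, i.e.\ showing precisely that property (4) is equivalent to the bottom-flow cap $b \le F$. This requires reading off best responses from the fixed-point system of Theorem~\ref{thm:characterization2}, correctly handling saturated planners (for whom the cut argument must be reexamined), and pinning down the exact Pareto-domination direction in the definition of (4). Secondary care is needed on the continuous-flow side: choosing the deviating set $S$ measurably, controlling its negligible contribution to congestion, and confirming that the lemma's bound persists at off-path histories where deviating cars have already perturbed the flow.
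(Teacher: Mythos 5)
Your proposal is correct and follows essentially the same route as the paper: your structural lemma is exactly the paper's Proposition~\ref{prop:pareto-eff} (no collective punishments forces the planners' bottom flow to stay at most $F$ at every history), your positive construction (cooperate at the $50/50$ split, punish detected deviations by reverting to the one-shot planner equilibrium for a calibrated finite number of rounds, with rotation for fairness) matches the paper's strategy, and your impossibility argument matches the paper's as well. The only cosmetic difference is in the negative direction, where you select a positive-measure set of above-average-cost cars to defect to the bottom path, while the paper averages over a fixed partition of each planner's cars into measure-$\frac{1}{N}$ groups and concludes that at least one group's always-bottom defection must be profitable; both are the same pigeonhole argument against the bound $F+\varepsilon<\frac{3}{4}$.
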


Note that Theorem~\ref{thm:characterization1} does not include the case in which $F = \frac{3}{4}$. This case depends on the partition $\bm{\alpha}$ and is discussed in Appendix~\ref{sec:proof-edge-case}. The proof of Theorems~\ref{thm:characterization2} and \ref{thm:characterization1} are given in Sections~\ref{sec:proof-char-2} and \ref{sec:proof-char-1}, respectively. Even though computing a given planner equilibrium is not trivial, there are special cases in which we can guarantee if the desiderata can be obtained or not. In particular, if the portion of the traffic controlled by each planner is small enough, we can construct a strategy profile that satisfies all the desired properties.

\begin{corollary}\label{cor:main-1}
    Let $G$ be Pigou's network and let $\Gamma = (G,C,\bm{\alpha})$ be a routing game with multiple planners. If no planner controls at least $\frac{1}{4}$ of all autonomous vehicles, there exists a strategy profile $\vec{\sigma}$ that satisfies individual rationality, resilience to competition, optimality, and no collective punishments. 
\end{corollary}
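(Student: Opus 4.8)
The plan is to deduce this corollary directly from Theorems~\ref{thm:characterization2} and~\ref{thm:characterization1}, so that no new machinery is needed. By Theorem~\ref{thm:characterization1}, it suffices to prove that whenever every planner's share satisfies $\alpha_i < \frac{1}{4}$, the total flow $F := \sum_i \alpha_i \lambda_i$ through the bottom path in the (unique) one-shot planner equilibrium satisfies $F > \frac{3}{4}$. The existence of a strategy profile satisfying all four properties then follows immediately from the first half of Theorem~\ref{thm:characterization1}.

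First I would rewrite the fixed-point characterization of Theorem~\ref{thm:characterization2} in terms of $F$. Substituting $F = \sum_j \alpha_j \lambda_j$ into Equation~\eqref{eq:characterization}, the triggering condition $\frac{1 - \sum_j \alpha_j \lambda_j}{\alpha_i} > 1$ becomes simply $\alpha_i < 1 - F$. Hence each planner routes all of its traffic through the bottom path ($\lambda_i = 1$) precisely when $\alpha_i < 1 - F$, and routes the fraction $\lambda_i = \frac{1-F}{\alpha_i}$ otherwise.

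The core step is a short argument by contradiction. Suppose $F \le \frac{3}{4}$. Then $1 - F \ge \frac{1}{4}$, and since by hypothesis every $\alpha_i < \frac{1}{4} \le 1 - F$, the rewritten characterization forces $\lambda_i = 1$ for all $i$. But then $F = \sum_i \alpha_i \lambda_i = \sum_i \alpha_i = 1$, contradicting $F \le \frac{3}{4}$. Therefore $F > \frac{3}{4}$, and applying the first half of Theorem~\ref{thm:characterization1} yields a strategy profile $\vec{\sigma}$ satisfying individual rationality, resilience to competition, optimality, and no collective punishments.

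I do not anticipate a genuine obstacle here, as the statement is essentially a corollary of the two preceding theorems; the only points requiring care are the faithful translation of the piecewise condition in Equation~\eqref{eq:characterization} into the inequality $\alpha_i < 1 - F$, and ensuring that the strict hypothesis $\alpha_i < \frac{1}{4}$ yields the strict conclusion $F > \frac{3}{4}$ needed to land in the favorable regime of Theorem~\ref{thm:characterization1}, rather than the boundary case $F = \frac{3}{4}$, which is explicitly excluded and treated separately in the appendix.
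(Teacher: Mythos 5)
Your proposal is correct and follows essentially the same route as the paper: both reduce the claim via Theorem~\ref{thm:characterization1} to showing $F > \frac{3}{4}$, and both derive this from Theorem~\ref{thm:characterization2} by noting that $F \le \frac{3}{4}$ together with $\alpha_i < \frac{1}{4} \le 1-F$ would force $\lambda_i = 1$ for every planner, yielding a contradiction. If anything, your version is slightly cleaner, since you conclude $F = \sum_i \alpha_i = 1$ outright rather than the paper's bound $\lambda_i > \frac{3}{4}$ for all $i$.
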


\begin{proof}
    Suppose that there exists a setting in which $\alpha_i < \frac{1}{4}$ for all $i$ but there is no strategy profile $\vec{\sigma}$ that satisfies the desiderata. By Theorem~\ref{thm:characterization1}, in the planner equilibrium, the total flow $F = \sum_{i \ge 1} \alpha_i \lambda_i$ satisfies $F \le \frac{3}{4}$. By Theorem~\ref{thm:characterization2}, this occurs if and only if the unique solution of Equation~\ref{eq:characterization} yields $F \le \frac{3}{4}$. However, this means that $\lambda_i \ge \min\left\{1, \frac{1 - F}{\alpha_i}\right\} > \frac{3}{4}$ for all $i$, which contradicts the assumption that $F \le \frac{3}{4}$.
\end{proof}

However, if a planner's influence is too large, or if there are not enough planners, the desiderata cannot be achieved.

\begin{corollary}\label{cor:main-2}
    Let $G$ be Pigou's network and let $\Gamma = (G,C,\bm{\alpha})$ be a routing game with at most two planners. There are no strategy profiles that satisfy individual rationality and no collective punishments. 
\end{corollary}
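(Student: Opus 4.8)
The plan is to reduce the claim to the one-shot characterization already established. By Theorem~\ref{thm:characterization1}, whenever the total equilibrium flow $F$ through the bottom path in the one-shot version of $\Gamma$ satisfies $F < \frac{3}{4}$, there is no strategy profile satisfying individual rationality together with no collective punishments. Hence it suffices to show that with at most two planners the unique one-shot planner equilibrium always has $F < \frac{3}{4}$. The strict inequality is convenient for a second reason: it sidesteps the $F = \frac{3}{4}$ edge case that Theorem~\ref{thm:characterization1} leaves open.

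First I would unpack the fixed-point system of Theorem~\ref{thm:characterization2}. Call a planner $i$ \emph{saturated} if $\lambda_i = 1$ and \emph{unsaturated} otherwise. For every unsaturated planner, Equation~\ref{eq:characterization} gives $\lambda_i = \frac{1-F}{\alpha_i}$, equivalently $\alpha_i \lambda_i = 1 - F$, while saturation occurs exactly when $\alpha_i < 1 - F$. A quick sanity check shows at least one planner must be unsaturated: if all were saturated then $F = \sum_i \alpha_i = 1$, forcing $1 - F = 0 < \alpha_i$ and contradicting the saturation condition.

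With this in hand the computation splits into a short case analysis. If there is a single planner, then $\alpha_1 = 1$ cannot be saturated, so $\alpha_1 \lambda_1 = 1 - F$ yields $F = \frac{1}{2}$. For two planners with $\alpha_1 + \alpha_2 = 1$, if both are unsaturated then $\alpha_1\lambda_1 = \alpha_2\lambda_2 = 1 - F$, so $F = \alpha_1\lambda_1 + \alpha_2\lambda_2 = 2(1-F)$ and hence $F = \frac{2}{3}$. If instead the smaller planner (say planner $1$) is saturated and planner $2$ is unsaturated, then $F = \alpha_1 + (1-F)$, giving $F = \frac{1+\alpha_1}{2}$; the saturation condition $\alpha_1 < 1 - F = \frac{1-\alpha_1}{2}$ forces $\alpha_1 < \frac{1}{3}$, so $F < \frac{2}{3}$. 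Both planners cannot be saturated simultaneously by the sanity check above. In every case $F \le \frac{2}{3} < \frac{3}{4}$, and applying Theorem~\ref{thm:characterization1} completes the argument.

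I do not expect a genuine obstacle here, since once the equilibrium is characterized the bounds are elementary; the one point demanding care is the self-consistency of the saturation regime. Specifically, I must treat the saturation condition $\alpha_1 < 1 - F$ as a constraint to be verified against the derived value $F = \frac{1+\alpha_1}{2}$ rather than assumed a priori, which is precisely what pins down $\alpha_1 < \frac{1}{3}$ and prevents $F$ from creeping up toward $\frac{3}{4}$ when the two planners are nearly equal. Confirming that both-saturated is impossible and that the unsaturated value $\lambda_i = \frac{1-F}{\alpha_i}$ indeed stays in $[0,1]$ in each regime closes any remaining gaps.
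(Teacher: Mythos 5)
Your proof is correct, and it follows the same overall reduction as the paper: invoke Theorem~\ref{thm:characterization1} to reduce the claim to showing that the one-shot planner equilibrium has $F < \frac{3}{4}$, and then extract that bound from the characterization in Theorem~\ref{thm:characterization2}. Where you diverge is in how the bound on $F$ is obtained. The paper avoids your saturation case analysis entirely by observing that the inequality $\lambda_i \le \frac{1-F}{\alpha_i}$ holds unconditionally --- with equality for unsaturated planners, and in the saturated case because $\lambda_i = 1 < \frac{1-F}{\alpha_i}$ by Equation~\ref{eq:characterization}. Summing $\alpha_i \lambda_i \le 1-F$ over all $n$ planners gives $F \le n(1-F)$, i.e.\ $F \le \frac{n}{n+1}$, which for $n \le 2$ yields $F \le \frac{2}{3} < \frac{3}{4}$ in one line and, moreover, quantifies for every $n$ how the bound degrades as the number of planners grows. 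Your exhaustive treatment of the saturation regimes buys something the paper's inequality does not: the exact equilibrium values ($F = \frac{1}{2}$ for a monopolist, $F = \frac{2}{3}$ for two unsaturated planners, and $F = \frac{1+\alpha_1}{2} < \frac{2}{3}$ when the smaller planner saturates), together with the self-consistency check $\alpha_1 < \frac{1}{3}$ that you correctly flag as the delicate step. The trade-off is that your case analysis is tied to $n \le 2$, whereas the paper's one-line summation is exactly the argument that underlies the general ``not enough planners'' phenomenon.
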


\begin{proof}
    This follows immediately from the fact that, in the one-shot version of $\Gamma$, the flow $F$ that goes through the bottom path in the planner equilibrium satisfies $F \le \frac{n}{n+1}$, where $n$ is the number of planners. To see this, note that Theorem~\ref{thm:characterization2} implies that, in this planner equilibrium, $\lambda_i \le \frac{1 - F}{\alpha_i}$ for all $i$. Therefore, $\sum_{i = 1}^n \alpha_i \lambda_i \le n(1 - F)$, which is equivalent to $F \le n (1 -F)$ and therefore to $F \le \frac{n}{n+1}$ as desired.
\end{proof}

\begin{corollary}\label{cor:main-3}
    Let $G$ be Pigou's network and let $\Gamma = (G,C,\bm{\alpha})$ be a routing game with multiple planners. If a planner controls more than half of all autonomous vehicles, there are no strategy profiles that satisfy individual rationality and no collective punishments. 
\end{corollary}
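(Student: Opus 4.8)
The plan is to reduce the claim to Theorem~\ref{thm:characterization1} by showing that a single planner controlling more than half the traffic forces the one-shot planner-equilibrium flow $F$ through the bottom path to satisfy $F < \frac{3}{4}$. Once this bound is established, the impossibility statement (no strategy profile satisfying individual rationality and no collective punishments) follows immediately from the $F < \frac{3}{4}$ branch of Theorem~\ref{thm:characterization1}. The argument closely parallels the proof of Corollary~\ref{cor:main-2}, differing only in how the dominant planner's share is accounted for.

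Concretely, let $i$ be the planner with $\alpha_i > \frac{1}{2}$, and let $\lambda_j$ denote the equilibrium bottom-path fractions, so that $F = \sum_{j} \alpha_j \lambda_j$. First I would invoke Theorem~\ref{thm:characterization2} exactly as in Corollary~\ref{cor:main-2} to obtain $\lambda_i \le \frac{1-F}{\alpha_i}$, i.e.\ $\alpha_i \lambda_i \le 1 - F$, capping the contribution of the dominant planner. For every other planner the trivial bound $\lambda_j \le 1$ gives $\alpha_j \lambda_j \le \alpha_j$. Splitting the defining sum for $F$ into the dominant planner and the rest then yields
$$F = \alpha_i \lambda_i + \sum_{j \ne i} \alpha_j \lambda_j \le (1 - F) + \sum_{j \ne i} \alpha_j = (1 - F) + (1 - \alpha_i).$$
Rearranging gives $2F \le 2 - \alpha_i$, hence $F \le 1 - \frac{\alpha_i}{2}$, and the strict inequality $\alpha_i > \frac{1}{2}$ upgrades this to $F \le 1 - \frac{\alpha_i}{2} < \frac{3}{4}$.

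The main thing to be careful about is obtaining the \emph{strict} bound $F < \frac{3}{4}$ rather than $F \le \frac{3}{4}$, since the boundary case $F = \frac{3}{4}$ is excluded from Theorem~\ref{thm:characterization1} and handled separately. This strictness is exactly what the hypothesis $\alpha_i > \frac{1}{2}$ (as opposed to $\alpha_i \ge \frac{1}{2}$) buys us, so the key step is recognizing that the dominant planner's contribution is capped at $1 - F$ while the remaining mass is at most $1 - \alpha_i$, and that these two bounds combine to a value strictly below $\frac{3}{4}$. With $F < \frac{3}{4}$ in hand, applying Theorem~\ref{thm:characterization1} completes the proof. No case analysis on whether planner $i$ saturates ($\lambda_i = 1$) is needed, since the bound $\alpha_i \lambda_i \le 1 - F$ holds uniformly.
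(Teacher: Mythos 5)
Your proof is correct and follows essentially the same route as the paper: both arguments use the characterization from Theorem~\ref{thm:characterization2} to cap the dominant planner's contribution, arrive at the same bound $F \le 1 - \frac{\alpha_i}{2} < \frac{3}{4}$, and then invoke the $F < \frac{3}{4}$ branch of Theorem~\ref{thm:characterization1}. The only difference is structural rather than substantive: the paper argues by contradiction (assuming $F \ge \frac{3}{4}$ to deduce $\lambda_1 = \frac{1-F}{\alpha_1} < \frac{1}{2}$), whereas you prove $F < \frac{3}{4}$ directly via the uniform bound $\alpha_i \lambda_i \le 1 - F$, which neatly avoids the case analysis on whether $\lambda_i$ saturates.
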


\begin{proof}
    Suppose that $\alpha_1 > \frac{1}{2}$ and that there exists a strategy profile $\vec{\sigma}$ that satisfies individual rationality and no collective punishments. Then, by Theorems~\ref{thm:characterization1} and \ref{thm:characterization2}, there exists a planner equilibrium in the one-shot version of $\Gamma$ that satisfies Equation~\ref{eq:characterization} and in which $F := \sum_{i \ge 1} \alpha_i \lambda_i \ge \frac{3}{4}$. This means that $\lambda_1 = \frac{1 - F}{\alpha_1} < \frac{1}{2}$, and therefore $F \le \frac{\alpha_1}{2} + (1 - \alpha_1) < \frac{3}{4}$. This contradicts the assumption that $F \ge \frac{3}{4}$.
\end{proof}

We believe that Corollaries~\ref{cor:main-1}, \ref{cor:main-2} and \ref{cor:main-3} can be generalized to any routing games with any single or multiple-commodity (i.e., with several sources and sinks) flow networks, although the exact thresholds may depend on the graph structure (see Conjecture~\ref{conj:main} on Section~\ref{sec:conclusion}). In the next section, we prove the main theorems.

\section{Proof of Theorem~\ref{thm:characterization2}}\label{sec:proof-char-2}

Suppose that $\vec{\sigma}$ is a planner equilibrium in the one-shot version of $\Gamma$ in which each planner $i$ sends a $\lambda_i$ fraction of their assigned agents to the bottom path. Then, the total cost of planner $i$ is given by $$c_i(\vec{\sigma}) = (1 - \lambda_i) + \lambda_i\left(\alpha_i \lambda_i + \sum_{j \not = i} \alpha_j \lambda_j\right).$$

If we fix $\lambda_j$ for all $j \not = i$, we get that the expression is a polynomial of degree 2 on $\lambda_i$, which can be written as $$c_i(\vec{\sigma}) = \alpha_i \lambda_i^2 + (F_i - 1)\lambda_i + 1,$$
where $F_i = \sum_{j \not = i} \alpha_j\lambda_j$.

%This expression is a parabola whose vertex $v_i$ is given by $v_i = \frac{1 - F_i}{\alpha_i}$, which can be rewritten as $$v_i = \frac{1 - F_i - \alpha_i \lambda_i + \alpha_i \lambda_i}{\alpha_i} = \frac{1 - F + \alpha_i\lambda_i}{\alpha_i}$$

Since $\vec{\sigma}$ is a planner equilibrium, $\lambda_i$ must be the value that maximizes this expression in $[0,1]$, which is either the vertex $v_i$ of the parabola, or $1$ if $v_i > 1$. We will prove Theorem~\ref{thm:characterization1} by showing that, in the first case, the value of $\lambda_i$ is exacty $\frac{1-F}{\alpha_i}$ and that, in the second case, $\frac{1 - F}{\alpha_i} > 1$.

In the first case we have that $$\lambda_i = \frac{1 - F_i}{2\alpha_i} = \frac{1 - F + \alpha_i\lambda_i}{2\alpha_i},$$
and therefore $$\lambda_i = \frac{1-F}{\alpha_i}.$$

It just remains to check that, if $\lambda_i = 1$ and $\frac{1 - F_i}{2\alpha_i} > 1$, then also $\frac{1 - F}{\alpha_i} > 1$. This follows from the fact that $$\frac{1-F}{\alpha_i} = 2 \cdot \frac{1 - F_i}{2\alpha_i} - \lambda_i > 1.$$

Putting everything together, we get that
$$
\lambda_i = \left \{
\begin{array}{lll}
1 & \quad & \displaystyle \mbox{if } \frac{1 - F}{\alpha_i} > 1\\
\displaystyle \frac{1 - F}{\alpha_i} & \quad &  \mbox{otherwise.}
\end{array}\right. ,
$$
as desired. To check that the solution of the resulting system is unique, note that $\alpha_i\lambda_i = \min\{\alpha_i, 1 - F\}$. Therefore, if we set $F' = \sum_{i \ge 1} \alpha_i \lambda_i$, $F'$ is monotone decreasing with respect to $F$, meaning that there is only one value of $F$ for which $F' = F$.

\subsection{Solving Equation~\ref{eq:characterization} Efficiently}

The key to solving Equation~\ref{eq:characterization} is the following observation.

\begin{proposition}\label{prop:sort}
    Suppose that $\alpha_1 \le \alpha_2 \le \ldots \alpha_n$. Then the unique solution of Equation~\ref{eq:characterization} satisfies that there exists an $k \le n$ such that $$
\lambda_i = \left \{
\begin{array}{lll}
1 & \quad & \displaystyle \mbox{if } n \le k \\
\displaystyle \frac{1 - F}{\alpha_i} & \quad &  \mbox{otherwise.}
\end{array}\right. 
$$
\end{proposition}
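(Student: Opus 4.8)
The plan is to exploit the characterization already established in the proof of Theorem~\ref{thm:characterization2}, which shows that at the unique solution of Equation~\ref{eq:characterization} we have $\lambda_i = 1$ precisely when $\frac{1 - F}{\alpha_i} > 1$ (equivalently $\alpha_i < 1 - F$) and $\lambda_i = \frac{1 - F}{\alpha_i} \le 1$ otherwise. The crucial point is that, at the fixed point, the quantity $F = \sum_j \alpha_j \lambda_j$ is a single determined constant, so the dichotomy above is governed entirely by comparing each $\alpha_i$ against the fixed threshold $1 - F$.

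Concretely, I would fix $F$ to be the value attained at the unique solution (unique by Theorem~\ref{thm:characterization2}) and set $\tau := 1 - F$. Then $\lambda_i = 1$ iff $\alpha_i < \tau$, while $\lambda_i = \frac{\tau}{\alpha_i} \le 1$ iff $\alpha_i \ge \tau$. Since the partition is sorted, $\alpha_1 \le \alpha_2 \le \cdots \le \alpha_n$, the set $\{i : \alpha_i < \tau\}$ is downward closed: if $\alpha_i < \tau$ then $\alpha_{i'} \le \alpha_i < \tau$ for every $i' \le i$. Hence this set is a prefix $\{1, \ldots, k\}$ for some $0 \le k \le n$, which is exactly the claimed threshold structure (the planners with $\lambda_i = 1$ are the smallest ones).

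The only point requiring care is the boundary case $\alpha_i = \tau$, where $\frac{\tau}{\alpha_i} = 1$: here both branches of Equation~\ref{eq:characterization} agree and assign $\lambda_i = 1$, so such an index may be grouped with either side without altering the value of any $\lambda_i$. This is the sole potential obstacle, and it dissolves upon noting that the two branches coincide at equality; everything else is a one-dimensional monotonicity observation once the self-referential $F$ is frozen at its fixed-point value. I would then close by remarking that this prefix structure is precisely what enables the efficient solution: for each candidate cutoff $k$ the first $k$ planners contribute $\sum_{i \le k} \alpha_i$ to $F$ and the remaining $n - k$ contribute $(n - k)(1 - F)$, giving a single linear equation in $F$ whose consistency can be checked in constant time per $k$.
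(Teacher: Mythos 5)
Your proposal is correct and takes essentially the same approach as the paper: the paper's proof also freezes $F$ at its unique fixed-point value and observes that $\lambda_i = 1$ exactly when $\alpha_i < 1 - F$, so that sorting the $\alpha_i$ makes this set a prefix. Your explicit handling of the boundary case $\alpha_i = 1 - F$, where both branches of Equation~\ref{eq:characterization} coincide and assign $\lambda_i = 1$, is a small refinement that the paper's one-line argument glosses over.
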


Intuitively, Proposition~\ref{prop:sort} says that the planners that send all their traffic to the bottom path in equilibrium are precisely those that have the least portion of the traffic, and the proof follows immediately from the fact that $\lambda_i = 1$ if and only if $\alpha_i < 1 - F$. Once $k$ is fixed, we can do the following to check if we get a correct solution.

From Proposition~\ref{prop:sort}, we have that 

$$
\alpha_i\lambda_i = \left \{
\begin{array}{lll}
\alpha_i & \quad & \displaystyle \mbox{if } n \le k \\
1 - F & \quad &  \mbox{otherwise.}
\end{array}\right. 
$$

Adding all these equations together, we get that $F = \sum_{i = 1}^k \alpha_i + (n - k) \cdot F$. Setting $S_k = \sum_{i = 1}^k \alpha_i$, we get that $$F = \frac{S_k + (n-k)}{n-k+1}.$$

Since we know that there is a unique solution, we can run the following linear algorithm to compute the flow in the planner equilibrium. We iterate for each $k = 0,1, \ldots, n-1$ until finding $k$ such that $\alpha_{k+1} \ge 1 - \frac{S_k + (n-k)}{n-k+1}$. After such $k$ is found, we return $ \frac{S_k + (n-k)}{n-k+1}$. 

The value of each $\lambda_i$ can be computed directly from $F$ using Equation~\ref{eq:characterization} in constant time. Note that computing $S_k$ from scratch in every iteration would result in a quadratic implementation. However, for each $k$ we can simply update its value in constant time.

\section{Proof of Theorem~\ref{thm:characterization1}}\label{sec:proof-char-1}

In this Section we prove Theorem~\ref{thm:characterization1} for $F > \frac{3}{4}$ and the converse for $F < \frac{3}{4}$. The proof for $F = \frac{3}{4}$ is a refinement of the one for $F > \frac{3}{4}$ and can be found in Appendix~\ref{sec:proof-edge-case}. 
Additionally, in Appendix~\ref{sec:proof-efficient}, we provide a construction that minimizes the incurred inefficiencies when dealing with traffic defections. However, such construction is less general since it requires planners to identify the subset of cars that defect.

Suppose that the planner equilibrium $\vec{\lambda}$ in the one-shot version of $\Gamma$ satisfies $F > \frac{3}{4}$. We provide next a high level description of the strategy profile $\vec{\sigma}$ for $\Gamma$ that satisfies all the desiderata. Initially, all planners split their traffic equally between the top and the bottom paths. However, if some subset of cars defect at some stage $k$, planners switch to $\vec{\lambda}$ for $N+1$ consecutive rounds, where $N$ is the minimum number such that $$N \cdot \left(F - \frac{3}{4}\right) > \frac{1}{2}.$$

The number of rounds for which the planners switch to the planner equilibrium stacks with the number of rounds in which a defection is detected. So, successive defections made by cars increase the total days in which planners assign their portion of the traffic according to $\vec{\lambda}$. 
To ensure fairness in how cars are rotated through the bottom path, each planner $i$ maintains a pointer $b_i^k$ to the last car assigned to that path. In each stage, the next $\lambda_i^k$ fraction of cars, starting from $b_i^k$ (modulo 1), are routed along the bottom path, with the rest on top. 

It is easy to check that $\vec{\sigma}$ satisfies the desired properties. It clearly satisfies optimality (the amount of planner equilibrium stages are always finite if cars stop defecting) and no collective punishments, since planners are either playing the socially optimal strategy or their best response at each stage. It is also straightforward to check that it satisfies individual rationality and resilience to competition, since defecting (resp., sending a larger portion of traffic) to the bottom path results in a decrease of at most $\frac{1}{2}$ cost at the current stage, but an increase of at least $N \cdot (F - \frac{3}{4})$ cost in future stages (note that the expected cost with no defections is $\frac{3}{4}$), minus the cost amortized by the discount factor $\lambda$. If $\lambda$ is close enough to $1$, the defection is not profitable. 

It remains to show that, if the planner equilibrium satisfies that $F < \frac{3}{4}$, there exists no strategy profile that satisfies individual rationality and no collective punishments. We start with the following proposition.

\begin{proposition}\label{prop:pareto-eff}
    Let $\Gamma$ be a routing game and let $\vec{\lambda}$ be the planner equilibrium of the one-shot version of $\Gamma$. If $F$ is the total flow sent through the bottom path in $\vec{\lambda}$, in all strategy profiles $\vec{\lambda}'$ for the one-shot version of $\Gamma$ such that the total flow through the bottom path is strictly greater than $F$ there exists a player that is applying a collective punishment (i.e., it could decrease the cost of everyone by playing a different action).
\end{proposition}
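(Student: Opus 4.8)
The plan is to work directly with the per-planner cost in Pigou's network. In any profile $\vec{\lambda}'$ with total bottom flow $F' = \sum_i \alpha_i \lambda'_i$, planner $i$ pays
\[
c_i(\vec{\lambda}') = (1 - \lambda'_i) + \lambda'_i F',
\]
since the top path costs $1$ and the bottom path costs $F'$. The central observation is that a unilateral reduction helps the collective: if planner $i$ lowers the fraction it routes to the bottom by some $\delta > 0$, the total flow drops to $F' - \alpha_i \delta$, so each \emph{other} planner $j$ sees its cost change by $-\lambda'_j \alpha_i \delta \le 0$ (strictly negative for every $j$ currently routing positive traffic to the bottom, and zero for any planner routing entirely on top, whose cost is pinned at $1$). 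Hence the entire argument reduces to exhibiting a single planner $i$ that can \emph{also} strictly lower its own cost by reducing its bottom flow.

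To identify such a planner, I would reuse the quadratic form $c_i = \alpha_i \lambda_i^2 + (F'_i - 1)\lambda_i + 1$ with $F'_i = F' - \alpha_i \lambda'_i$ fixed. Its derivative at the current point $\lambda'_i$ equals $\alpha_i \lambda'_i + F' - 1$, so planner $i$ strictly benefits from shrinking its flow exactly when $\alpha_i \lambda'_i > 1 - F'$; call such a planner \emph{overshooting}. Note an overshooter automatically has $\lambda'_i > 0$ (since $1 - F' \ge 0$), so the reduction is feasible. The crux is then the implication $F' > F \Rightarrow$ some planner overshoots, which I would prove by contraposition. If no planner overshoots, then $\alpha_i \lambda'_i \le 1 - F'$ for all $i$; combined with $\alpha_i \lambda'_i \le \alpha_i$ this gives $\alpha_i \lambda'_i \le \min\{\alpha_i, 1 - F'\}$, and summing yields $F' \le g(F')$ where $g(x) := \sum_i \min\{\alpha_i, 1 - x\}$.

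Here I invoke the characterization established in the proof of Theorem~\ref{thm:characterization2}, namely that the one-shot equilibrium satisfies $\alpha_i \lambda_i = \min\{\alpha_i, 1 - F\}$, i.e.\ $F = g(F)$. Since $g$ is non-increasing, the map $x \mapsto g(x) - x$ is strictly decreasing with unique root $F$; from $g(F') - F' \ge 0 = g(F) - F$ I conclude $F' \le F$, contradicting $F' > F$. Thus an overshooting planner $i$ exists, and by the first paragraph its reduction strictly lowers its own cost while weakly lowering every other planner's cost (strictly for all planners routing to the bottom). This exhibits an alternative action of planner $i$ under which no planner is worse off and $i$ itself is strictly better off, which is precisely the collective punishment asserted by the proposition.

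I expect the main obstacle to be the monotone fixed-point comparison that produces the overshooting planner, since it is the one step where the hypothesis $F' > F$ is genuinely used; the rest is a direct cost computation. The second delicate point to handle explicitly is the status of planners routing entirely on the top path: their cost is pinned at $1$ and cannot be decreased by any other planner's action. I would address this by observing that these are exactly the planners sending \emph{nothing} to the bottom, so they are left unaffected by the overshooter's reduction and the deviation remains the required Pareto improvement; it is the planners on the bottom path whose costs strictly fall.
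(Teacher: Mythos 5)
Your proof is correct, and its skeleton matches the paper's: both arguments reduce to exhibiting an ``overshooting'' planner $i$ with $\alpha_i\lambda'_i > 1 - F'$ (i.e., one sitting strictly above the vertex of its convex cost parabola), and then observing that shrinking $i$'s bottom flow strictly lowers $i$'s own cost while lowering the congestion term $F'$, hence the cost of the others. Where you genuinely diverge is in how the overshooter is produced. The paper does it pointwise: since $F' > F$ there is some $i$ with $\lambda'_i > \lambda_i$; this forces $\lambda_i < 1$, so the characterization of Theorem~\ref{thm:characterization2} gives $\lambda_i = \frac{1-F}{\alpha_i}$, and the chain $\lambda'_i > \frac{1-F}{\alpha_i} > \frac{1-F'}{\alpha_i}$ is already the overshooting condition. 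You instead argue by contraposition at the aggregate level: if nobody overshoots then $\alpha_i\lambda'_i \le \min\{\alpha_i, 1-F'\}$ for all $i$, summing gives $F' \le g(F')$ with $g(x) := \sum_i \min\{\alpha_i, 1-x\}$, and the strict monotonicity of $g(x)-x$ together with $g(F)=F$ (the same fixed-point fact the paper invokes for uniqueness in Theorem~\ref{thm:characterization2}) forces $F' \le F$, a contradiction. Both routes lean on Theorem~\ref{thm:characterization2}; the paper's is shorter, while yours is more mechanical and does not need to reason about which coordinate deviated. One point in your favor: you explicitly treat planners with $\lambda'_j = 0$, whose cost is pinned at $1$. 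Read literally, the proposition claims the deviation decreases \emph{everyone's} cost, which is impossible when such planners exist; the paper's proof silently has the same gap, and your weakened conclusion (strict improvement for $i$ and every bottom-routing planner, no change for top-only planners) is exactly what the downstream use of the proposition --- bounding the bottom flow by $F$ under no collective punishments --- actually requires.
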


\begin{proof}
Suppose that $\vec{\lambda}'$ is a strategy profile that routes strictly more total flow to the bottom path than in the planner equilibrium $\vec{\lambda}$. Then, there exists $i \in C$ such that $\lambda'_i > \lambda_i$. In particular, since $\lambda_i$ cannot be $1$, this means that $\lambda_i = \frac{1 - F}{\alpha_i}$. Therefore, if $F' = \sum_{i \ge 1} \alpha_i \lambda'_i$, we have that $\lambda'_i > \frac{1 - F}{\alpha_i} > \frac{1 - F'}{\alpha_i}$. This implies that $i$ could decrease the total cost of her agents by routing less traffic to the bottom path while also decreasing the total cost of all other planners.
\end{proof}

Proposition~\ref{prop:pareto-eff} implies that, no matter what happens, if there are no collective punishments, the total flow assigned by the planners on the bottom path is always going to be at most $F$. We use this to prove the converse of Theorem~\ref{thm:characterization1}.

%Fix any $\lambda \in [0,1]$. 
Let $N$ be a positive integer such that $F + \frac{1}{N} < \frac{3}{4}$ and let $S_j = (\frac{j-1}{N}, \frac{j}{N})$. Let $\vec{\sigma}$ be any strategy profile for $\Gamma$. For each planner $i$ and each $j \in \{1,2,\ldots, N\}$ consider a defection $(i,S_j,D_i^j)$ that simply assigns all $i$'s agents in $S_j$ to the bottom path no matter what. If defections $(i, S_j, D_i^j)$ were not beneficial for agents in $S_j$ for a sufficiently high discount factor $\lambda$, by Proposition~\ref{prop:pareto-eff} this would imply that $$c_{(i,S_j)}^\lambda(\vec{\sigma}) \le \sum_{k \ge 1} \left(F + \frac{1}{N}\right) \cdot \lambda^k < \sum_{k \ge 1} \frac{3}{4}\lambda^k$$

for all $i \in C$ and $j \in \{1,2,\ldots, N\}$. Adding all these expressions together we get that $c^\lambda(\vec{\sigma}) < \sum_{k \ge 1} \frac{3}{4} \lambda^k$, which is strictly less than the socially optimal assignment. This means that at least one of the considered defections must be successful, and therefore that $\vec{\sigma}$ is not individually rational.

\section{Discussion}\label{sec:conclusion}

In this work, we investigated the interplay between competition and coordination in routing games with autonomous vehicles. Our main result shows that, contrary to conventional wisdom, the presence of multiple competing planners is not an obstacle to achieving optimal traffic flow, but actually a prerequisite. Specifically, we characterized the conditions under which a routing mechanism can simultaneously satisfy individual rationality, resilience to competition, optimality, and the absence of collective punishments in the canonical case of Pigou’s network. In particular, we showed that if no planner controls more than $\frac{1}{4}$ of the traffic, then there are mechanisms that converge to the optimal solution. We further proved that if there is not enough competition between planners (e.g., if there are too few planners or one holds too much influence), then no mechanism can satisfy all desiderata simultaneously.

Although our results focus on Pigou’s network, we believe that the insights generalize to broader classes of networks. In particular, we conjecture that similar thresholds for planner influence and number of competitors determine whether optimality and incentive compatibility can be reconciled. Formally, we can state this as follows.

\begin{conjecture}\label{conj:main}
Given any routing game $\Gamma = (G, C, \bm{\alpha})$, where $G$ is a multi-commodity flow network in which the price of stability is strictly greater than one, there exist $\alpha_{\min}, \alpha_{\max} \in (0,1)$, and $N \in \mathbb{N}$ such that
\begin{itemize}
    \item [(a)] If $\alpha_i < \alpha_{\min}$ for all $i \in C$, there exists a strategy profile $\vec{\sigma}$ in $\Gamma$ that satisfies individual rationality, resilience to competition, optimality, and no collective punishments. 
    \item [(b)] If $|C| \le N$, there are no strategy profiles that satisfy individual rationality, resilience to competition, and no collective punishments. 
    \item [(c)] If $\alpha_i > \alpha_{\max}$ for some $i \in C$, there are no strategy profiles that satisfy individual rationality, resilience to competition, and no collective punishments.
\end{itemize}
\end{conjecture}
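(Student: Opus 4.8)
The plan is to mirror the three-stage architecture behind the Pigou results, replacing the scalar bottleneck flow $F$ by a suitable scalar efficiency gap, and then reducing each clause of the conjecture to a statement about how this gap varies with the partition $\bm{\alpha}$. Concretely, I would first generalize the one-shot analysis of Section~\ref{sec:proof-char-2}: for each partition $\bm{\alpha}$, define the one-shot planner equilibrium $\vec{\lambda}$ of $G$ and let $\Delta(\bm{\alpha}) := c(\vec{\lambda}) - c_{OPT}$ be the amount by which the total cost of its flow exceeds the social optimum. The role played by the threshold $F = \tfrac34$ in Theorem~\ref{thm:characterization1} would be taken over by whether $\Delta(\bm{\alpha})$ exceeds the largest one-stage benefit that any car coalition or planner can extract from the optimal flow. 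The first substantial task is therefore to establish existence of the one-shot planner equilibrium for a general network. Since each planner minimizes a convex function of its own (splittable) flow when every $c_e$ is convex, existence follows from a Rosen/Kakutani fixed-point argument for concave games; uniqueness of the aggregate equilibrium flow must be argued separately (for instance via strict convexity of the edge costs) and is the first place where the clean Pigou picture may break down.

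For part (a) I would reuse the grim-trigger construction of Section~\ref{sec:proof-char-1} essentially verbatim at the level of strategy: planners play the social optimum during a cooperative phase and revert to the one-shot planner equilibrium $\vec{\lambda}$ for a stacking number of rounds whenever a deviation is detected. No collective punishments holds for free, exactly as argued in that section: the cooperative flow is globally optimal, so no unilateral planner move can decrease the total cost and hence cannot decrease every planner's cost at once, while the punishment flow $\vec{\lambda}$ is by definition a profile of mutual best responses. Individual rationality and resilience to competition then reduce, just as in the Pigou case, to the deterrence inequality that a long enough punishment phase accumulates more cost than any single-stage gain once $\lambda$ is close to $1$; this requires only that the per-round punishment margin $\Delta(\bm{\alpha})$ be bounded below by a positive constant. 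The heart of part (a) is thus to show that $\Delta(\bm{\alpha}) \ge \Delta_0 > 0$ whenever every $\alpha_i$ is small. I would prove this by the limiting argument that the atomic-splittable planner equilibrium converges, as $\max_i \alpha_i \to 0$, to the nonatomic Wardrop equilibrium of $G$, whose cost strictly exceeds $c_{OPT}$ precisely because the price of stability is assumed greater than one; continuity then yields a threshold $\alpha_{\min}$ below which $\Delta(\bm{\alpha}) \ge \Delta_0$, and the punishment length is calibrated against $\Delta_0$ exactly as in the display defining $N$.

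For parts (b) and (c) I would generalize Proposition~\ref{prop:pareto-eff}, which is the true engine of the impossibility side: in Pigou it says that any no-collective-punishment stage flow sends at most $F$ to the bottom path, so the sustainable congestion is capped by the planner equilibrium. The general statement I would aim for is that among all stage profiles in which no planner applies a collective punishment, the planner equilibrium is extremal in the direction of inefficiency, so that the total cost of any such profile is at most $c(\vec{\lambda}) = c_{OPT} + \Delta(\bm{\alpha})$. Granting this, the converse argument of Section~\ref{sec:proof-char-1} transfers: one exhibits a family of car coalitions, each flooding its share onto the cheapest available route, whose simultaneous deterrence would force the aggregate cost strictly below $c_{OPT}$ and hence be impossible, so individual rationality must fail whenever $\Delta(\bm{\alpha})$ is smaller than the total one-stage gain of these coalitions. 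It then remains to show that $\Delta(\bm{\alpha})$ is small in exactly the two regimes of the conjecture: a dominant planner (large $\alpha_i$) internalizes congestion and routes near-optimally, driving $\Delta(\bm{\alpha}) \to 0$ as $\alpha_i \to 1$, which yields $\alpha_{\max}$; and a small number of planners likewise cannot over-congest, giving a bound $\Delta(\bm{\alpha}) \le \delta(|C|)$ with $\delta(|C|) \to 0$ as $|C| \to 1$, which yields $N$. Both reduce to monotonicity of the equilibrium inefficiency under consolidation of shares, the analogue of the bounds $F \le \tfrac{n}{n+1}$ and $\alpha_1 > \tfrac12 \Rightarrow F < \tfrac34$ used to prove the corollaries.

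I expect the main obstacle to be the multi-commodity generalization of Proposition~\ref{prop:pareto-eff}. In Pigou, congestion is one-dimensional, so the set of no-collective-punishment flows is totally ordered by the single scalar $F$ and the planner equilibrium is literally its maximum; in a general network congestion is a vector over edges, there is no reason for a unique extremal point to exist, and relating the absence of collective punishments to a scalar upper bound on total cost is genuinely harder, likely requiring either a potential-function argument specific to the network class or a weaker, approximate notion of extremality. The second difficulty, closely tied to the first, is controlling the planner equilibrium quantitatively as a function of $\bm{\alpha}$, namely existence, uniqueness of the aggregate flow, and the monotone dependence of $\Delta$ on consolidation, since atomic-splittable routing equilibria are known to lack the clean uniqueness and comparative-statics properties of their nonatomic counterparts. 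I would therefore expect the conjecture to be provable in full generality only under additional structural hypotheses on $G$ (for instance series-parallel networks, or affine edge costs), with the Pigou analysis recovered as the simplest instance.
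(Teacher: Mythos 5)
First, a point of calibration: the paper offers no proof of this statement. It is stated as Conjecture~\ref{conj:main} and explicitly left as future work, so there is no ``paper proof'' to compare yours against, and your submission is in any case a research plan rather than a proof --- you yourself flag its two main open components (existence/uniqueness of atomic-splittable planner equilibria, and the multi-commodity analogue of Proposition~\ref{prop:pareto-eff}). That candor is appropriate, but the plan also contains a concrete error that would sink it even in the Pigou case it is modeled on, so it cannot be accepted even as a correct roadmap.

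The error is the choice of scalar. You replace the paper's threshold quantity $F - \tfrac{3}{4}$ by the aggregate efficiency gap $\Delta(\bm{\alpha}) := c(\vec{\lambda}) - c_{OPT}$ and claim that deterrence (hence part (a)) works whenever $\Delta$ is bounded below by a positive constant, and that impossibility (parts (b) and (c)) holds when $\Delta$ is small. In Pigou, $c(\vec{\lambda}) = F^2 + (1-F) = \tfrac{3}{4} + \left(F - \tfrac{1}{2}\right)^2$, so $\Delta(\bm{\alpha}) = \left(F - \tfrac{1}{2}\right)^2$, which is strictly positive throughout the impossibility regime: two equal planners give $F = \tfrac{2}{3} < \tfrac{3}{4}$, where Corollary~\ref{cor:main-2} proves that no strategy profile satisfies individual rationality and no collective punishments, yet $\Delta = \tfrac{1}{36} > 0$, so your criterion would wrongly certify a grim-trigger construction there. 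The reason the criterion fails is that what deters a car defection is not how much the punishment profile costs in aggregate, but how much it costs \emph{the defectors}: a defecting car sits on the bottom path, where the punishment-phase congestion is capped at $F$ by Proposition~\ref{prop:pareto-eff} (given no collective punishments), so when $F < \tfrac{3}{4}$ the ``punishment'' actually rewards the defector relative to the cooperative cost $\tfrac{3}{4}$; the positive gap $\Delta$ is borne entirely by the non-defecting cars stuck on the top path. Any correct generalization must therefore be path-specific --- comparing the sustainable (no-collective-punishment-capped) congestion on the routes defectors would flee to against the cost those defectors receive under the optimal flow --- which is exactly the vector-valued difficulty you defer to your final paragraph; it is not a peripheral technical obstacle to your reduction but a refutation of it. Your convergence-to-Wardrop idea for part (a) and consolidation-monotonicity idea for parts (b) and (c) are sensible starting points, but they must be run on this path-wise quantity, not on $\Delta$.
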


Proving this conjecture is a promising direction for future work.

\bibliographystyle{plain}
\bibliography{bibfile}

\appendix

\section{Analysis of Theorem~\ref{thm:characterization2} for $F = \frac{3}{4}$}\label{sec:proof-edge-case}

Theorem~\ref{thm:characterization1} can be generalized as follows. If $F = \frac{3}{4}$, there exists a strategy profile that satisfies individual rationality, resilience to competition, optimality and no collective punishments if and only if, in the planner equilibrium of the one-shot version of $\Gamma$, $\lambda_i < 1$ for all $i$ (i.e., if no planner sends all of their traffic through the bottom path.) 

The proof is fairly technical, although we give a high-level overview in the next paragraphs. Suppose that, in the planner equilibrium $\vec{\lambda}$, no planner sends all their traffic through the bottom path. Assume without loss of generality that $\lambda_1 \ge \lambda_2 \ge \ldots \ge \lambda_n$. The proof in this case is essentially the same as the one for $F > \frac{3}{4}$, except that planners have to select the number $N$ of planner equilibrium days according to the amount of traffic that defected. More precisely, if the total amount of traffic through the bottom path according to $\vec{\sigma}$ should be $F$ but it turns out that it is $F' > F$, planners select $N$ to be a number such that $$N (F' - F) > \frac{1}{2} \quad \mbox{and} \quad N \cdot \frac{1 - \lambda_1}{4} > \frac{1}{2}$$

The proof of correctness is identical as in the $F > \frac{3}{4}$ case by noticing that, during the planner equilibrium rounds, if the defecting agents follow the planner recommendation, they get an average of at least $1 - \frac{\lambda_1}{4}$ expected cost during those rounds (each agent assigned to planner $i$ is routed a $\lambda_i$ fraction of the stages to the bottom path), and if they all defect by going through the bottom path, they get $((F' - F) + \frac{3}{4})$ expected cost. Both options are worse than sticking to the socially optimal routing if the discount factor is large enough. 

If there exists an optimal strategy profile $\vec{\sigma}$ that satisfies individual rationality and no collective punishments, players assigned to $i$ can simply defect and go to the bottom path no matter what since, in the planner equilibrium they get an expected cost of $\frac{3}{4}$. This guarantees that, regardless of how the planners react, since $\vec{\sigma}$ has no collective punishments, the cost incurred by agents assigned to $i$ in future will be at most $\frac{3}{4}$, which is the same as in the socially optimal outcome.

\section{Achieving the Desiderata with Minimal Inefficiency}\label{sec:proof-efficient}

If planners are able to identify which cars defect, rather than simply detecting excess flow, it is possible to construct strategies that satisfy all desiderata and incur arbitrarily low inefficiencies. Instead of immediately switching to the planner equilibrium, planners can apply slight increases in the bottom path load over many rounds. The load increase stacks with each turn in which cars defect, except if the total flow reaches the planner equilibrium flow. If this happens, planners extend the number of load-increase stages instead. Even though, in expectation, this is exactly the same as switching to the planner equilibrium for a smaller number of stages, car identification allows the option of planners  giving the opportunity for cars to redeem themselves by going through the top path for several rounds, until they compensate the benefit they got by defecting. If cars redeem themselves, planners can switch back to the optimal routing. This fine-grained retaliation still deters deviations, but allows the system to quickly recover with an arbitrarily low overhead.

\end{document}